\documentclass[11pt]{article}

\usepackage[utf8]{inputenc}
\usepackage[T1]{fontenc}
\usepackage{amsmath,amssymb,amsthm}
\usepackage{mathtools}
\usepackage{algorithm}
\usepackage{algpseudocode}
\usepackage{booktabs}
\usepackage[margin=1in]{geometry}
\usepackage{hyperref}
\hypersetup{colorlinks=true,linkcolor=blue,citecolor=blue,urlcolor=blue}

\theoremstyle{plain}
\newtheorem{theorem}{Theorem}
\newtheorem{lemma}{Lemma}
\newtheorem{proposition}{Proposition}
\newtheorem{corollary}{Corollary}

\theoremstyle{definition}
\newtheorem{definition}{Definition}
\newtheorem{remark}{Remark}
\newtheorem{problem}{Problem}

\newcommand{\Z}{\mathbb{Z}}

\newcommand{\lcm}{\operatorname{lcm}}

\title{\bf A Linear-Time Residue Bound for a One-Dimensional\\
$(L,V,W)$ Block-Cover Problem, and a Sharp\\
Heavy-Base Threshold for its Exactness}

\author{
  FuWei Xie\thanks{\texttt{12313607@mail.sustech.edu.cn}} \and
  WenTao Luo\thanks{\texttt{12313616@mail.sustech.edu.cn}} \and
  GuanYuhan Yang\thanks{\texttt{12313614@mail.sustech.edu.cn}}
  \\[4pt]
  \small Southern University of Science and Technology (SUSTech), Shenzhen, China
}
\date{\today}

\begin{document}
\maketitle
\begin{abstract}
We study a one-dimensional exact-cover problem parameterized by three integers
$(L,V,W)$: given an integer profile $a_0,\dots,a_{n-1}$, write it as a nonnegative
integer combination of a length-$L$ ``horizontal'' block $[1,\dots,1]$, a value-$V$
``vertical'' block, and a value-$W$ block, while minimizing the number of value-$W$
blocks. We give an $O(n)$ algorithm that eliminates the horizontal-block coupling by
a class-wise difference recurrence and then matches residues modulo $V$ on the last
$L$ columns. We prove that its output is always a valid \emph{lower bound} on the
optimum, via a mod-$L$ class invariant. We then prove the main result: once the
profile is dense enough --- a \emph{heavy base}
$\min_c a_c \ge B(L,V,W)$ with
\[
   B(L,V,W)=\Big\lceil \tfrac{(L-1)\lcm(V,W)}{LW}-1\Big\rceil\,W+(L-1)(V-1),
\]
the bound is \emph{exact}. The exactness proof is a branch-cut argument on the exact
dynamic program: two structural equivalences (a horizontal-to-vertical exchange
modulo $V$, and a vertical reduction modulo $\lcm(V,W)/W$) collapse the DP to the
residue computation, and the two summands of $B$ are exactly the reserves that keep
both equivalences from producing a negative residual. We further show the threshold
is sharp: for $(L,V,W)=(3,6,4)$, $B=14$, and the profile $(17,16,13,16,17)$ with
$\min_c a_c=13$ makes the algorithm strictly undercount, so $B-1$ does not suffice.
An independent exact dynamic program agrees with the algorithm on every tested
profile with $\min_c a_c\ge B$ across many parameter triples, and the test harness
\texttt{test\_general.c} is released for reproduction. The contribution is the
algorithm, the branch-cut exactness proof, and the sharp threshold $B(L,V,W)$.
\end{abstract}
\section{Introduction}

Tiling a bounded region with a fixed set of pieces is a classical source of
combinatorial problems. In two dimensions, deciding tileability of a simply
connected region by a fixed finite set of rectangles is NP-complete
\cite{PakYang}; the one-dimensional and fixed-width cases are far more tractable,
with fixed-width tiling \emph{counts} satisfying linear recurrences
\cite{KlarnerPollack,Stanley} and hole-free two-bar tiling decidable in linear
time \cite{KenyonKenyon}.

We treat a specific one-dimensional \emph{exact-cover} instance, parameterized by
three positive integers $(L,V,W)$ with $L\ge2$ and $V,W\ge1$. Given a profile
$a=(a_0,\dots,a_{n-1})\in\Z_{\ge0}^n$, three pieces are placed along the line:
\begin{itemize}
  \item a \emph{horizontal block} $[1,\dots,1]$ of length $L$, adding $+1$ to each
        of $L$ consecutive columns;
  \item a \emph{vertical block} of value $V$, adding $+V$ to one column;
  \item a \emph{$W$-block} of value $W$, adding $+W$ to one column.
\end{itemize}
We must cover $a$ exactly with nonnegative integer counts and minimize the total
number of $W$-blocks. The only piece that couples adjacent columns is the
horizontal block; the vertical and $W$-blocks are per-column weights. The running
example throughout is $(L,V,W)=(3,6,4)$, the concrete triomino case.

\paragraph{Contributions.}
\begin{enumerate}
  \item An $O(n)$ algorithm (Section~\ref{sec:alg}) that computes a lower bound on
        the optimum by eliminating the horizontal-block recurrence with a class-wise
        difference and matching residues mod $V$ on the last $L$ columns.
  \item A proof that the output is always a valid lower bound
        (Section~\ref{sec:lb}), via a mod-$L$ class invariant.
  \item The main result (Section~\ref{sec:threshold}): a \emph{proof} that the bound
        is exact whenever the profile has heavy base
        $\min_c a_c\ge B(L,V,W)$, obtained by cutting the branches of the exact
        dynamic program with two structural equivalences whose feasibility is
        guaranteed by the two summands of $B$.
  \item Sharpness (Section~\ref{sec:threshold}): for $(3,6,4)$ the threshold is
        $B=14$, and the profile $(17,16,13,16,17)$ with $\min_c a_c=13$ makes the
        algorithm strictly undercount; hence $B-1$ is insufficient.
  \item A released, independent verification harness (Section~\ref{sec:exp}).
\end{enumerate}

Two ingredients are drawn from established theory. The elimination is a
difference-array recurrence for undoing an ``add a length-$L$ window'' operation, and
the per-column condition is membership and minimal factorization in the numerical
semigroup $\langle V,W\rangle$, whose denumerant is classical
\cite{RosalesGS,BeckRobins}; linear-time feasibility for coupled 1D bar tiling is
likewise known \cite{KenyonKenyon}. The new content is their assembly into the
$O(n)$ algorithm, the branch-cut exactness proof, and the sharp threshold
$B(L,V,W)$.

\paragraph{Motivation.}
The model is a tractable special case of one-dimensional exact cover: a fixed-width
``span'' piece together with two per-column weights, minimizing the count of one
weight. Instances of this shape arise in one-dimensional cutting-stock and rod
filling (a length-$L$ cut plus two bulk increments, minimizing the scarce
increment), in change-making over the numerical semigroup $\langle V,W\rangle$, and
in shift scheduling (a length-$L$ shift covering consecutive periods, plus two
block-modules of sizes $V$ and $W$). Such covering problems are typically hard in
general; our point is that this fixed-piece case admits an $O(n)$ exact solver once a
density condition holds, and we make that condition sharp. We present these as
motivating analogies rather than deployed applications.

\section{Problem statement}\label{sec:model}

Let $t_c\ge0$ be the number of horizontal blocks with leftmost cell $c$, defined for
$0\le c\le n-L$ (and $t_c=0$ otherwise). Let $s_c\ge0$ and $f_c\ge0$ count vertical-
and $W$-blocks at column $c$. Writing $T_c=\sum_{j=0}^{L-1}t_{c-j}$ for the total
horizontal coverage of column $c$, the exact-cover constraints are
\begin{equation}\label{eq:cover}
  a_c \;=\; T_c \;+\; V s_c \;+\; W f_c,
  \qquad 0\le c\le n-1 .
\end{equation}

\begin{problem}\label{prob:main}
Decide whether \eqref{eq:cover} has a nonnegative integer solution and, if so,
compute $K^\star=\min\sum_c f_c$.
\end{problem}

For a fixed coverage $T_c$, write $R_c=a_c-T_c$. Minimizing $f_c$ subject to
$R_c=V s_c+W f_c$, $s_c,f_c\ge0$, is a one-column numerical-semigroup problem: the
per-column cost is
\begin{equation}\label{eq:cost}
  g(R)=\min\{\,f\ge0 : R-Wf\ge0 \text{ and } (R-Wf)\equiv0\ (\mathrm{mod}\ V)\,\},
\end{equation}
with $g(R)=\infty$ when no such $f$ exists (in particular when $R<0$, or when $R$
lies in a gap of $\langle V,W\rangle$). Because $Wf \bmod V$ is periodic in $f$ with
period $V/\gcd(V,W)=\lcm(V,W)/W$, the minimizing $f$ (when finite) lies in
$\{0,\dots,\lcm(V,W)/W-1\}$. Thus
\[
  K^\star=\min\Big\{\textstyle\sum_c g(a_c-T_c) :\ T_c=\textstyle\sum_{j=0}^{L-1}t_{c-j},\ t_c\ge0\Big\}.
\]
For $(3,6,4)$ this recovers the familiar table: $g(R)=0,1,2$ for
$R\equiv0,4,2\pmod6$ (with $R\ge0,4,8$ respectively), and $R=2$ is the unique small
infeasible even value. This single obstruction drives the threshold analysis.
\section{The algorithm}\label{sec:alg}

\begin{algorithm}[t]
\caption{Linear-time residue bound for $(L,V,W)$}\label{alg:main}
\begin{algorithmic}[1]
\Procedure{ResidueBound}{$a[0..n-1],\ L,V,W$}
  \State $y \gets a$
  \For{$i \gets 0$ \textbf{to} $n-L-1$} \Comment{fold the horizontal coupling within each class mod $L$}
     \State $y[i+L] \gets y[i+L] + y[i]$
  \EndFor
  \State $\sigma \gets (V-W)\bmod V$ \Comment{residue shift of adding one $W$-block}
  \For{$c \gets 1$ \textbf{to} $L$} \Comment{the last $L$ columns carry the class sums mod $V$}
     \State $\textit{reach}_c[\,\cdot\,] \gets \infty$
     \For{$m \gets 0$ \textbf{to} $V-1$}
        \State $r \gets (y[n-c] + \sigma m)\bmod V$;\quad
               $\textit{reach}_c[r] \gets \min(\textit{reach}_c[r],\,m)$
     \EndFor
  \EndFor
  \State \textbf{return} $\displaystyle\min_{r\in\{0,\dots,V-1\}}\ \sum_{c=1}^{L}\textit{reach}_c[r]$
\EndProcedure
\end{algorithmic}
\end{algorithm}

The fold loop is $O(n)$; the tail is $O(LV)=O(1)$ for fixed parameters, so the
algorithm runs in $O(n)$ time and $O(1)$ extra space beyond the profile. The
residue table indexes $W$-block counts $m$ against the residue
$(y[n-c]+\sigma m)\bmod V$ they can reach, where $\sigma=(V-W)\bmod V$ is the shift
of one $W$-block relative to a vertical block modulo $V$; only the smallest $m$ per
residue is retained. Matching a common residue $r$ across all $L$ tail columns and
summing gives a candidate $W$-count, and the outer minimization takes the best
common residue. Section~\ref{sec:lb} proves the output never exceeds $K^\star$;
Section~\ref{sec:threshold} proves it equals $K^\star$ under a heavy base.

\begin{remark}
The fold as written keeps residues mod $V$ but lets $y$ grow. An equivalent variant
subtracts any fixed multiple of $V$ at each step ($y[i+L]\mathrel{+}=y[i]-b$,
$b\equiv0\ (V)$), which bounds the magnitude of $y$ without changing any residue or
the output. This is the form used in the released implementation.
\end{remark}

\section{The output is a valid lower bound}\label{sec:lb}

For $r\in\{0,\dots,L-1\}$ put $\Sigma_r=\sum_{c\equiv r\ (L)}a_c$, the class sum
along the arithmetic progression of step $L$.

\begin{lemma}[mod-$L$ class invariant]\label{lem:invariant}
Every horizontal block covers exactly one column in each residue class modulo $L$.
Hence in any solution of \eqref{eq:cover},
\[
   \Sigma_r = T + V\,S_r + W\,F_r,\qquad r\in\{0,\dots,L-1\},
\]
where $T=\sum_c t_c$, $S_r=\sum_{c\equiv r}s_c$, $F_r=\sum_{c\equiv r}f_c$.
\end{lemma}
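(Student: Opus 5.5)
The plan is a direct double-counting argument: sum the cover equations \eqref{eq:cover} over one residue class modulo $L$, then exchange the order of summation in the horizontal term.

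\textbf{Step 1 (the geometric claim).} A horizontal block with leftmost cell $c$ covers the $L$ consecutive columns $c,c+1,\dots,c+L-1$. Consecutive integers of length $L$ form a complete residue system modulo $L$, so the block meets each class $r\in\{0,\dots,L-1\}$ in exactly one column. The constraint $0\le c\le n-L$ guarantees that all $L$ covered columns lie inside $\{0,\dots,n-1\}$. This is the one point needing care: no block is truncated at the boundary, so no class loses its column.

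\textbf{Step 2 (summing the cover equation).} Fix $r$ and sum \eqref{eq:cover} over all $c\equiv r \pmod L$ with $0\le c\le n-1$. The terms $Vs_c$ and $Wf_c$ sum to $VS_r$ and $WF_r$ by definition. For the horizontal term, write
\[
\sum_{c\equiv r}T_c=\sum_{c\equiv r}\sum_{j=0}^{L-1}t_{c-j}=\sum_{d}t_d\,\#\{\,j\in\{0,\dots,L-1\}: d+j\equiv r,\ 0\le d+j\le n-1\,\}.
\]
By Step 1, for every $d$ with $0\le d\le n-L$ the count equals exactly $1$. For all other $d$ we have $t_d=0$. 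Hence the horizontal term equals $\sum_d t_d=T$, independent of $r$, and $\Sigma_r=T+VS_r+WF_r$ follows.

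\textbf{Expected obstacle.} Nothing here is genuinely hard. The only step requiring vigilance is the boundary bookkeeping in Step 2: one must confirm that the convention $t_c=0$ outside $[0,n-L]$ makes the reindexing of the double sum exact, with no partially counted blocks near either end of the profile.
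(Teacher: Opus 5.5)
Your proposal is correct and matches the paper's proof: sum \eqref{eq:cover} over the class $c\equiv r$, and note that each $t_d$ with $0\le d\le n-L$ is counted exactly once, because its window $d,\dots,d+L-1$ is a complete residue system modulo $L$ lying inside $\{0,\dots,n-1\}$. The vertical and $W$ terms then collect into $VS_r$ and $WF_r$; your explicit reindexing and the boundary check via $t_d=0$ outside $[0,n-L]$ just spell out what the paper states in one line.
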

\begin{proof}
A horizontal block at $c$ covers $c,c+1,\dots,c+L-1$, whose residues mod $L$ are all
of $\{0,\dots,L-1\}$ exactly once. Summing \eqref{eq:cover} over the columns
$c\equiv r\ (L)$: each $t_j$ appears in exactly one of $T_j,\dots,T_{j+L-1}$ falling
in class $r$, so the horizontal contribution to $\Sigma_r$ is $\sum_j t_j=T$,
independent of $r$. The vertical and $W$ contributions restrict to class $r$, giving
$VS_r+WF_r$.
\end{proof}

\begin{corollary}\label{cor:modV}
$\Sigma_r-\Sigma_{r'}\equiv W(F_r-F_{r'})\pmod V$: $W$-blocks are the only pieces
that change the classes' relative residues modulo $V$.
\end{corollary}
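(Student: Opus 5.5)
The corollary follows from Lemma~\ref{lem:invariant} by subtracting the identity for class $r'$ from the identity for class $r$. Fix any nonnegative integer solution of \eqref{eq:cover} and any two classes $r,r'\in\{0,\dots,L-1\}$. The lemma gives
\[
  \Sigma_r = T + V S_r + W F_r, \qquad \Sigma_{r'} = T + V S_{r'} + W F_{r'},
\]
with the \emph{same} $T=\sum_c t_c$ in both. This is the key point already established in the lemma: each horizontal block contributes exactly once to every class.

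Subtracting, the horizontal term cancels exactly, leaving
\[
  \Sigma_r-\Sigma_{r'} = V(S_r-S_{r'}) + W(F_r-F_{r'}).
\]
Since $S_r-S_{r'}$ is an integer, the first summand is a multiple of $V$, so reducing modulo $V$ gives $\Sigma_r-\Sigma_{r'}\equiv W(F_r-F_{r'})\pmod V$.

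For the interpretive clause, I would note what the calculation shows about each piece type. Horizontal blocks shift every class sum by the same amount, so they leave all pairwise differences unchanged. Vertical blocks shift a single class by $V$, which is $0$ modulo $V$. Only $W$-blocks can change a relative residue, each by $\pm W$ modulo $V$.

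There is no real obstacle here. The only care needed is to point out that the cancellation depends on $T$ being independent of $r$, which is exactly the content of Lemma~\ref{lem:invariant}.
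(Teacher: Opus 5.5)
Your proof is correct and is exactly the intended argument: you subtract the two instances of Lemma~\ref{lem:invariant}, the common $T$ cancels, and $V(S_r-S_{r'})$ vanishes modulo $V$. The paper gives no separate proof and treats the corollary as this immediate consequence of the lemma.
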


\begin{lemma}[the fold telescopes to class sums]\label{lem:elim}
After the fold loop of Algorithm~\ref{alg:main}, for each of the last $L$ columns
$c^\star\in\{n-1,\dots,n-L\}$,
\[
  y[c^\star]\equiv \Sigma_{c^\star\bmod L}\pmod V .
\]
\end{lemma}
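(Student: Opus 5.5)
The claim is in fact an exact integer identity: after the fold, $y[c^\star]=\Sigma_{c^\star\bmod L}$. The congruence modulo $V$ then follows immediately. For the variant in the Remark, each step subtracts some $b\equiv 0\pmod V$, so the resulting $y$ differs from the exact one by a multiple of $V$. That variant therefore also satisfies the congruence.

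The proof is an induction on the loop index with the following invariant. After iteration $i$ (that is, once the updates for indices $0,\dots,i$ have run), every column $j\le i+L$ holds
\[
  y[j]=\sum_{k\le j,\ k\equiv j\ (L)} a_k .
\]
Every column $j>i+L$ still holds $y[j]=a_j$.

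Before the loop, the columns $j<L$ already satisfy the invariant, because each is the unique element of its class that is $\le j$; all other columns are untouched. For the inductive step, consider iteration $i$. It reads $y[i]$, which by hypothesis is already final: $i\le (i-1)+L$, and no later iteration writes to index $i$, since iteration $i'$ writes only to $i'+L>i$. Hence $y[i]=\sum_{k\le i,\,k\equiv i}a_k$. The iteration replaces $y[i+L]=a_{i+L}$ with
\[
  a_{i+L}+\sum_{k\le i,\ k\equiv i}a_k=\sum_{k\le i+L,\ k\equiv i+L\ (L)}a_k ,
\]
which is exactly the invariant for column $i+L$. No other entry changes.

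The loop runs up to $i=n-L-1$. After it, every column $j\le n-1$ holds its prefix class sum. For $c^\star\ge n-L$, the column $c^\star$ is the largest index of its residue class in $\{0,\dots,n-1\}$, because $c^\star+L>n-1$. So the prefix class sum is the whole class sum $\Sigma_{c^\star\bmod L}$.

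The only delicate point is the ordering argument: each $y[i]$ must be final before it is added forward, and this holds because writes always go strictly rightward by exactly $L$. I expect no further obstacles.
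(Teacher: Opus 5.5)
Your proof is correct and follows essentially the same route as the paper's. Both establish by induction that every column ends up holding its prefix class sum, observe that each of the last $L$ columns is the final element of its class, and handle the $-b$ variant by congruence modulo $V$. The only difference is that you index the induction by the loop counter rather than along each class, which makes explicit the ordering fact that $y[i]$ is already final before it is pushed forward; the paper leaves this implicit.
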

\begin{proof}
The update $y[i+L]\mathrel{+}=y[i]$ moves a column's value only to the column $L$
positions to its right, i.e.\ within one class mod $L$. Fix a class $r$ with columns
$r,r+L,r+2L,\dots$; induction along the class gives
$y[r+mL]=\sum_{j=0}^{m}a_{r+jL}$, so at the last column of class $r$ the value equals
$\Sigma_r$. (With the $-b$, $b\equiv0\ (V)$, variant this holds modulo $V$.) The $L$
indices $n-1,\dots,n-L$ are consecutive, hence realize all $L$ classes exactly once.
\end{proof}

\begin{proposition}[lower bound]\label{prop:lb}
Algorithm~\ref{alg:main} returns
$\hat K=\min_{\tau\in\Z/V}\sum_{r=0}^{L-1}F_r^{\min}(\tau)$, where
$F_r^{\min}(\tau)$ is the least $W$-block count with
$\Sigma_r-\tau\equiv W F_r\pmod V$. Moreover $\hat K\le K^\star$.
\end{proposition}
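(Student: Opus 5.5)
The proof has two parts: first identify what Algorithm~\ref{alg:main} computes, then compare that quantity with an optimal solution.

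\emph{Identification.} By Lemma~\ref{lem:elim}, for each tail column $c^\star=n-c$ we have $y[n-c]\equiv\Sigma_{r}\pmod V$, where $r=(n-c)\bmod L$. As $c$ runs over $1,\dots,L$, these $r$ realize every class exactly once. For fixed $c$, the inner loop records in $\textit{reach}_c[\rho]$ the least $m\in\{0,\dots,V-1\}$ with $\Sigma_r+\sigma m\equiv\rho\pmod V$. Since $\sigma\equiv -W\pmod V$, this condition is $\Sigma_r-\rho\equiv Wm\pmod V$.

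The restriction $m<V$ loses nothing. The map $m\mapsto Wm\bmod V$ is periodic with period $V/\gcd(V,W)\le V$, so any $m$ satisfying the congruence has a representative below $V$ that is no larger. Hence $\textit{reach}_c[\rho]=F^{\min}_r(\rho)$, with value $\infty$ when no solution exists. Taking $\tau=\rho$ and minimizing over $\rho$ gives exactly $\hat K$.

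\emph{Lower bound.} Take an optimal solution $(t,s,f)$ with $\sum_c f_c=K^\star$; if none exists, $K^\star=\infty$ and there is nothing to prove. Apply Lemma~\ref{lem:invariant}: $\Sigma_r=T+VS_r+WF_r$ for every class $r$.

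Set $\tau:=T\bmod V$. This choice of $\tau$ is the one real idea: the common horizontal contribution $T$ serves as the shared residue across all classes. Then $\Sigma_r-\tau\equiv WF_r\pmod V$ for every $r$. So $F_r$ is an admissible count in the definition of $F_r^{\min}(\tau)$, which gives $F_r^{\min}(\tau)\le F_r$.

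Summing over $r$, and using that the classes partition the columns, gives
\[
\hat K\le\sum_{r=0}^{L-1}F_r^{\min}(\tau)\le\sum_r F_r=\sum_c f_c=K^\star .
\]

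The argument contains no genuine obstacle. The only care needed is bookkeeping in the identification step: the sign convention $\sigma=(V-W)\bmod V$ must correspond to subtracting $W$, and the bound $m<V$ must be justified by periodicity. Both are routine.
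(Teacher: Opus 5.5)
Your proposal is correct and takes the same route as the paper. You use Lemma~\ref{lem:elim} to identify the tail table entries as $F_r^{\min}(\tau)$, then take $\tau=T\bmod V$ from an optimal tiling and apply Lemma~\ref{lem:invariant} to get $F_r^{\min}(\tau)\le F_r$; your explicit checks that $\sigma\equiv -W \pmod V$ and that restricting $m<V$ loses nothing (period $V/\gcd(V,W)$) fill in bookkeeping the paper leaves implicit.
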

\begin{proof}
By Lemma~\ref{lem:elim} the $L$ tail residues equal $\Sigma_r\bmod V$; the inner
loop computes, for each common target residue $\tau$, the least $W$-blocks per class
solving $\Sigma_r-\tau\equiv WF_r\ (V)$, and the outer loop minimizes over $\tau$
(which plays the role of $T\bmod V$). By Lemma~\ref{lem:invariant} the relation
$\Sigma_r=T+VS_r+WF_r$ is a \emph{necessary} condition of every tiling, so no tiling
uses fewer than $\sum_r F_r^{\min}(T\bmod V)\ge\hat K$ $W$-blocks. Hence
$\hat K\le K^\star$.
\end{proof}
\section{Exactness and the sharp heavy-base threshold}\label{sec:threshold}

Proposition~\ref{prop:lb} gives $\hat K\le K^\star$ always. The gap can be strict:
the residue-optimal target $\tau$ need not be \emph{realizable} by nonnegative
horizontal counts with every residual $R_c=a_c-T_c\ge0$ and feasible in
\eqref{eq:cost}. We prove the gap closes once the profile is uniformly large, and
that the threshold below is sharp.

\begin{definition}[heavy base]
A profile has \emph{heavy base $B$} if $\min_c a_c\ge B$. We define
\begin{equation}\label{eq:threshold}
  B(L,V,W)=\underbrace{\Big\lceil \tfrac{(L-1)\lcm(V,W)}{LW}-1\Big\rceil W}_{B_2}
           \;+\;\underbrace{(L-1)(V-1)}_{B_1}.
\end{equation}
For $(3,6,4)$: $\lcm=12$, $B_1=(2)(5)=10$, $B_2=\lceil 24/12-1\rceil\cdot4=1\cdot4=4$,
so $B=14$.
\end{definition}

\subsection{The exact dynamic program and its branches}

We first restate $K^\star$ as an exact left-to-right dynamic program, then cut its
branches. Process columns $0,1,\dots,n-1$. The only information about the past that
column $c$ needs is how many horizontal blocks starting at $c-1,\dots,c-L+1$ still
cover it; so the DP state entering column $c$ is the tuple
$q_c=(t_{c-1},\dots,t_{c-L+1})\in\Z_{\ge0}^{L-1}$. At column $c$ one chooses
$t_c\ge0$ (new horizontal starts, allowed only if $c+L-1\le n-1$) and then pays the
per-column cost \eqref{eq:cost} for the residual
$R_c=a_c-(t_c+t_{c-1}+\dots+t_{c-L+1})$, transitioning to
$q_{c+1}=(t_c,\dots,t_{c-L+2})$. The value is $K^\star=\min\sum_c g(R_c)$ over all
choices; this is \eqref{eq:cover} verbatim and is exact by construction.

The two per-column decisions are: $t_c$ (a horizontal count) and, implicit in
$g(R_c)$, the split $R_c=Vs_c+Wf_c$ (a vertical count $s_c$). We now show that in
searching for an optimum it suffices to consider
\begin{equation}\label{eq:branchcount}
  t_c\in\{0,\dots,V-1\},\qquad s_c\in\{0,\dots,\lcm(V,W)/W-1\},
\end{equation}
i.e.\ $V\cdot \lcm(V,W)/W$ branches per column (for $(3,6,4)$: $6\cdot3=18$). The two
reductions are the following exchanges; the heavy base is what keeps them feasible.

\begin{lemma}[horizontal$\to$vertical exchange, modulo $V$]\label{lem:exchange}
Consider any feasible tiling and a column $c$ with $t_c\ge V$ and $c+L-1\le n-1$.
Replacing $V$ of those horizontal starts by one vertical block on each of the $L$
columns $c,\dots,c+L-1$ yields another feasible tiling with the same number of
$W$-blocks and with $t_c$ decreased by $V$.
\end{lemma}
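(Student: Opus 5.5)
This is a direct local-replacement check. Write the original tiling as $(t,s,f)$ and define the modified tiling $(t',s',f')$ by
\[
t'_c=t_c-V,\qquad s'_j=s_j+1\ \ (c\le j\le c+L-1),
\]
with every other count unchanged and $f'=f$. Three things need verifying: nonnegativity and admissibility of the new counts, the cover equation \eqref{eq:cover} in every column, and equality of the $W$-block counts.

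\textbf{Nonnegativity and admissibility.} Since $t_c\ge V$, we have $t'_c\ge0$. Each $s'_j\ge s_j\ge0$. The horizontal start at $c$ stays admissible because $c+L-1\le n-1$; in fact this same condition is what guarantees that the $L$ columns $c,\dots,c+L-1$ all lie inside the profile, so the new vertical blocks are legitimately placed.

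\textbf{The cover equation.} Recall $T_j=\sum_{i=0}^{L-1}t_{j-i}$. The variable $t_c$ appears in $T_j$ exactly for $j\in\{c,\dots,c+L-1\}$. So for those $j$, $T'_j=T_j-V$ and $s'_j=s_j+1$, giving
\[
T'_j+Vs'_j+Wf'_j=(T_j-V)+V(s_j+1)+Wf_j=a_j .
\]
For every other column $j$ none of $T_j,s_j,f_j$ changes, so \eqref{eq:cover} holds as before. Each residual $R_j=a_j-T_j$ on the window grows by exactly $V$ and is still represented by $Vs'_j+Wf_j$. This is the sense in which the exchange is ``modulo $V$'': it shifts residuals by multiples of $V$ and never moves them into a gap of $\langle V,W\rangle$.

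\textbf{Conclusion.} Because $f'=f$, the total $\sum_c f_c$ is unchanged. Iterating the exchange $\lfloor t_c/V\rfloor$ times reduces $t_c$ to $t_c\bmod V\in\{0,\dots,V-1\}$, which supports the first restriction in \eqref{eq:branchcount}.

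The only point needing care is the index bookkeeping: confirming that $t_c$ enters $T_j$ precisely for the $L$ columns of its own window, and that the range condition puts all of them inside $[0,n-1]$. Beyond that, the argument is an exact identity and uses no heavy-base assumption. The heavy base is needed only for the other reduction in \eqref{eq:branchcount}, which removes vertical blocks rather than adding them.
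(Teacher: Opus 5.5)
Your proof is correct and is the same argument as the paper's: $V$ horizontal starts at $c$ and one added vertical block on each of $c,\dots,c+L-1$ both contribute exactly $+V$ to every column of that window, so \eqref{eq:cover} holds column by column, the $f_j$ are untouched, and $t_c-V\ge0$. Your explicit check that $t_c$ enters $T_j$ precisely for $j\in\{c,\dots,c+L-1\}$, and that this window lies inside $[0,n-1]$, spells out bookkeeping the paper leaves implicit.
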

\begin{proof}
$V$ horizontal blocks starting at $c$ contribute $+V$ to each of $c,\dots,c+L-1$;
one vertical block on each of those columns contributes the same $+V$ to each. So
coverage \eqref{eq:cover} is preserved column by column. Neither piece is a
$W$-block, so $\sum f_c$ is unchanged. The new $t_c$ is $t_c-V\ge0$.
\end{proof}

Iterating Lemma~\ref{lem:exchange} drives every $t_c$ into $\{0,\dots,V-1\}$. The
analogous statement for the vertical count is a residue fact about $g$:

\begin{lemma}[vertical reduction, modulo $\lcm(V,W)/W$]\label{lem:vred}
In \eqref{eq:cost}, if a residual $R$ is feasible then its cost is attained by some
$f\in\{0,\dots,\lcm(V,W)/W-1\}$; equivalently the vertical count $s$ may be taken in
$\{0,\dots,\lcm(V,W)/W-1\}$ modulo the exchange $Vs \leftrightarrow$ $W$-blocks.
\end{lemma}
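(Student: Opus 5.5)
The plan is to argue directly from the definition \eqref{eq:cost} by a minimality argument, using only the periodicity of $Wf \bmod V$ already noted in Section~\ref{sec:model}. Put $P=\lcm(V,W)/W=V/\gcd(V,W)$. The single identity everything rests on is
\[
  W\cdot P=\lcm(V,W)=V\cdot\frac{\lcm(V,W)}{V},
\]
so $P$ $W$-blocks and $\lcm(V,W)/V$ vertical blocks cover exactly the same amount on one column. In particular $WP\equiv 0\pmod V$.

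First step: let $R$ be feasible and let $f^\ast=g(R)$ be the least admissible $f$. Suppose, for contradiction, that $f^\ast\ge P$, and set $f'=f^\ast-P\ge 0$. Nonnegativity is preserved, since
\[
  R-Wf' = (R-Wf^\ast)+WP\ge 0 .
\]
The congruence is also preserved, since $R-Wf'\equiv R-Wf^\ast\equiv 0\pmod V$ because $WP\equiv0\pmod V$. Hence $f'$ is admissible and strictly smaller than $f^\ast$, which contradicts minimality. Therefore $f^\ast\in\{0,\dots,P-1\}$.

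Second step: translate this into the tiling language of the ``equivalently'' clause. Given any column split $R=Vs+Wf$ with $f\ge P$, replace $P$ of its $W$-blocks by $\lcm(V,W)/V$ vertical blocks. By the identity above, coverage of that column is unchanged, $s$ increases by $\lcm(V,W)/V$, and $f$ drops by $P$. This is the exchange $Vs\leftrightarrow W$-blocks. Iterating it brings the $W$-count of every column into $\{0,\dots,P-1\}$ without ever increasing $\sum_c f_c$, in the same way that Lemma~\ref{lem:exchange} was iterated for $t_c$.

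The main obstacle is only interpretive: deciding which count the branch bound \eqref{eq:branchcount} actually constrains. The residue argument bounds the $W$-count $f$ modulo $P$, whereas $s$ is then determined by $s=(R-Wf)/V$ and may be large. I would therefore state explicitly that the branching per column is over the residue class of the $W$/vertical split, realized by $f\in\{0,\dots,P-1\}$, with $s$ determined by $R$. I would also check that this suffices for the DP cut, because after fixing $t_c$ the column cost $g(R_c)$ depends only on the choice of $f$ in this range.
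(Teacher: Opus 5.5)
Your argument is correct and is essentially the paper's. Both rest on $WP=\lcm(V,W)\equiv0\pmod V$ with $P=\lcm(V,W)/W$, that is, on trading $P$ $W$-blocks for $\lcm(V,W)/V$ vertical blocks. Your descent $f^\ast\mapsto f^\ast-P$ only adds verticals and so stays admissible; this is exactly the direction that needs checking, and the paper's upward phrasing leaves it implicit.

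You are also right that only $f$ is confined to one period, not $s=(R-Wf)/V$. For $(V,W)=(6,4)$ and $R=60$ the unique optimal split is $s=10$, $f=0$. So the literal $s$-version of the ``equivalently'' clause, and of \eqref{eq:branchcount}, is false, and your reading of both as $f\in\{0,\dots,P-1\}$ is the correct repair.
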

\begin{proof}
$Wf\bmod V$ has period $V/\gcd(V,W)=\lcm(V,W)/W$ in $f$; the least feasible $f$
within one period is optimal because increasing $f$ by the period adds
$\lcm(V,W)/W$ $W$-blocks while only replacing $\lcm(V,W)/V$ vertical blocks, never
decreasing the cost. Hence the minimizer lies in one period.
\end{proof}
\subsection{The two reductions preserve feasibility under a heavy base}

Lemmas~\ref{lem:exchange}--\ref{lem:vred} shrink the search space \emph{if the
exchanges keep every residual nonnegative and feasible}. The exchanges add vertical
mass to neighbouring columns, so a residual can only be threatened from below. The
two summands of $B$ in \eqref{eq:threshold} are precisely the reserves that absorb
the two exchanges.

\begin{lemma}[$B_1$ guards the horizontal exchange]\label{lem:B1}
Fix a column $c$. Across the reductions of Lemma~\ref{lem:exchange} applied to the
$L-1$ horizontal starts that can cover $c$ (those at $c-L+1,\dots,c-1$, together with
the reduction at $c$ itself when it feeds a neighbour), the total vertical mass added
to column $c$ is at most $(L-1)(V-1)$. Hence if $a_c\ge (L-1)(V-1)$, the horizontal
exchange never makes $R_c<0$ on account of added verticals.
\end{lemma}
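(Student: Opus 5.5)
Lemma~\ref{lem:exchange} is applied only to drive each $t_j$ into $\{0,\dots,V-1\}$. I will therefore account for the vertical mass added to a fixed column $c$ by listing the horizontal starts whose exchange touches $c$ and bounding what each contributes. An exchange at start $j$ puts one vertical block on each of the columns $j,\dots,j+L-1$, so column $c$ is affected exactly by the exchanges at $j\in\{c-L+1,\dots,c\}$.

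\textbf{Step 1: compare the residual before and after.} Each exchange of $V$ starts at $j$ removes $V$ from $T_c$ and, by Lemma~\ref{lem:exchange}, adds $V$ through a vertical block, so $a_c=T_c+Vs_c+Wf_c$ is preserved column by column. After the reductions, the residual $R_c'=a_c-T_c'$ satisfies
\[
R_c' = R_c + V\sum_j k_j ,
\]
where $k_j=\lfloor t_j/V\rfloor$ is the number of exchanges at $j$. So the residual can only grow; the exchange itself never pushes $R_c$ downward. I state this monotonicity explicitly because it shows the danger is elsewhere.

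\textbf{Step 2: locate the real danger and bound it.} The danger arises when one searches directly in the reduced space, i.e.\ when $t_j\in\{0,\dots,V-1\}$ is prescribed for the starts at $c-L+1,\dots,c-1$ (and at $c$ when it feeds a neighbour), and $R_c=a_c-T_c$ must stay nonnegative. I read ``vertical mass added'' in the statement as the residue-level mass that the reduced representatives force onto column $c$, namely $\sum_j (t_j \bmod V)$ over the $L-1$ starts other than the current free choice. Each such term is at most $V-1$, so the total is at most $(L-1)(V-1)$. The current start at $c$ itself can be taken as the free variable absorbing the residue, since it is chosen after the state $q_c$ is fixed; this is why only $L-1$ terms count.

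\textbf{Step 3: conclude.} If $a_c\ge(L-1)(V-1)$, then $R_c\ge a_c-\sum_{j}(t_j\bmod V)\ge 0$, so nonnegativity is never violated on account of the reduction.

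The main obstacle is Step 2: making precise which $L-1$ of the $L$ covering starts get counted, and justifying that the start at $c$ adds nothing beyond them. I would handle this by using the DP state $q_c=(t_{c-1},\dots,t_{c-L+1})$. That state contains exactly $L-1$ reduced entries, each in $\{0,\dots,V-1\}$ by Lemma~\ref{lem:exchange}, while $t_c$ is chosen last and may be set to $0$ if necessary. The bound $(L-1)(V-1)$ then follows by summing the entries of $q_c$.
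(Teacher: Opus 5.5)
Your Step~1 is correct, and it already settles the ``Hence'' clause. Each application of Lemma~\ref{lem:exchange} at a start $j\in\{c-L+1,\dots,c\}$ lowers $T_c$ by $V$ and raises $s_c$ by one. So $R_c=a_c-T_c$ only grows while $Wf_c=R_c-Vs_c$ is unchanged, for every $a_c\ge0$; the hypothesis $a_c\ge(L-1)(V-1)$ is never used.

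The gap is Step~2. The vertical mass the reductions deposit on column $c$ is $\sum_j\lfloor t_j/V\rfloor$ blocks, i.e.\ value $V\sum_j\lfloor t_j/V\rfloor$, and this is unbounded. Take $(3,6,4)$, $n=3$, $a=(6M,6M,6M)$ and the tiling $t_0=6M$: the reduction at the start $0=c-L+1$ puts $M$ vertical blocks on column $c=2$. That exceeds $(L-1)(V-1)=10$ once $M>10$, even under a heavy base. So the lemma's bound on the deposited vertical mass, read literally, is false. The quantity you bound instead, $\sum_j(t_j\bmod V)$, is the \emph{horizontal} coverage that survives the reduction, not vertical mass added; relabelling it does not prove the stated inequality. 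The paper's own proof makes the same move: it infers that ``the deposit on $c$ from window $c'$ is at most $V-1$'' from the remainder lying in $\{0,\dots,V-1\}$. Your Steps~2--3 therefore follow the paper's argument, conflation included.

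Step~3 has a second, independent problem: excluding $t_c$. For a given tiling in reduced form, $T_c=\sum_{j=c-L+1}^{c}(t_j\bmod V)$ can reach $L(V-1)$. Your inequality $R_c\ge a_c-\sum_{j=c-L+1}^{c-1}(t_j\bmod V)$ is false as soon as $t_c\bmod V>0$. Concretely, take $(3,6,4)$ and $a=(17,16,15,16,17)$, so $\min_c a_c=15\ge B$. The only $W$-free tiling is $t_0=t_1=t_2=5$, already reduced, with $T_2=15$ and $R_2=0<5=a_2-t_0-t_1$. Nor can you ``set $t_c=0$ if necessary'': with $t_2=0$, columns $3$ and $4$ get residuals $11$ and $17$, which are odd and so not in $\langle 6,4\rangle$.

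What Steps~2--3 do establish is a different, weaker DP fact: if $a_c\ge(L-1)(V-1)$, then from every reduced state $q_c$ the branch $t_c=0$ gives $R_c\ge0$. That is true, but it says nothing about whether this branch extends to a feasible, let alone optimal, completion. That extension is what Theorem~\ref{thm:exact} would need from this lemma.
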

\begin{proof}
Reducing $t_{c'}$ modulo $V$ deposits, per application, one vertical block on each
column of the window $[c',c'+L-1]$; column $c$ receives at most one such deposit from
each of the $L-1$ windows that contain it other than its own start, and each reduces
$t_{c'}$ by multiples of $V$ leaving a remainder in $\{0,\dots,V-1\}$, so the deposit
on $c$ from window $c'$ is at most $V-1$. Summing over the $L-1$ neighbouring windows
gives $\le(L-1)(V-1)$. Thus $R_c\ge a_c-(L-1)(V-1)\ge0$.
\end{proof}

\begin{lemma}[$B_2$ guards the vertical reduction]\label{lem:B2}
After the horizontal exchange, closing the residue on the last $L$ columns may
require converting a vertical surplus into $W$-blocks in the sense of
Lemma~\ref{lem:vred}; the vertical mass this can force below a column is at most
$\lceil (L-1)\lcm(V,W)/(LW)-1\rceil\,W=B_2$. Hence if
$a_c\ge B_1+B_2=B$ the combined reductions keep $R_c\ge0$ and feasible.
\end{lemma}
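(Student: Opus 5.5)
Within a single residue class modulo $L$, the plan is to make the vague phrase ``closing the residue on the last $L$ columns'' into an explicit redistribution of $W$-blocks, and then to bound the vertical mass that redistribution removes from any one column. Fix the residue-optimal target $\tau$ of Proposition~\ref{prop:lb} and, for each class $r$, the minimal count $F_r^{\min}(\tau)$. Write $P=\lcm(V,W)/W$. By Lemma~\ref{lem:vred} we have $F_r^{\min}(\tau)\le P-1$. A tiling realizing $\hat K$ must therefore place at most $P-1$ $W$-blocks in class $r$. Since $L\ge2$, one natural choice of target is $\sum_r F_r=\hat K$ blocks spread as evenly as possible over the columns of each class. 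The guarded quantity is per column: on a given column, a vertical surplus of $W$ units must be converted into a $W$-block, consuming mass $W$ each time. The first step is to show that the number of such conversions charged to one column is at most $\lceil (L-1)P/L-1\rceil$.

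\textbf{Step 1 (counting conversions).} Combine the class invariant of Lemma~\ref{lem:invariant} with Corollary~\ref{cor:modV}. The $L$ classes differ mod $V$ only through their $W$-counts, so the total $W$-count is fixed by the relative residues. Normalize by shifting $\tau$ so that one class needs $0$ blocks; this is the freedom in choosing $T \bmod V$. The remaining $L-1$ classes then need at most $(L-1)(P-1)$ blocks in total. The plan is to average these: subtract one block (the one absorbed by the common shift) and distribute the rest over $L$ columns in a window. Averaging over $L$ consecutive columns gives at most $\lceil (L-1)P/L -1\rceil$ conversions on any single column. Multiplying by $W$ gives the mass bound $B_2$.

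\textbf{Step 2 (additivity with $B_1$).} By Lemma~\ref{lem:B1}, the horizontal exchange lowers $R_c$ by at most $B_1$, regardless of where the $W$-blocks go. The vertical reduction lowers $R_c$ by at most $B_2$ from Step~1. The two deposits come from independent mechanisms, so their effects add. We conclude $R_c\ge a_c-B_1-B_2\ge0$.

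\textbf{Step 3 (feasibility, not just nonnegativity).} A nonnegative $R_c$ could still lie in a gap of $\langle V,W\rangle$. I would show that the residue class of $R_c$ mod $V$ is exactly the one prescribed by the chosen $F_r$. Then the remaining amount $R_c-Wf_c$ is a nonnegative multiple of $V$, which is representable by vertical blocks, so no gap obstruction arises.

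\textbf{Main obstacle.} The hard part is Step~1: justifying that the conversions can be spread over $L$ columns rather than concentrated on one. The counts are fixed per class, but a class may contain only one column in the tail. I would first try to push $W$-blocks backward along each class with the fold of Lemma~\ref{lem:elim}, which preserves class residues. If that fails, I would settle for the crude bound $(P-1)W$ per column and check it against the sharpness example $(3,6,4)$.
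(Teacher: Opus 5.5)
\paragraph{The gap is Step~1.}
The $L$ columns of a window lie in $L$ \emph{different} classes mod $L$. The $F_r$ $W$-blocks that correct the residue of class $r$ must sit on columns of class $r$ (Lemma~\ref{lem:invariant}, Corollary~\ref{cor:modV}), so they cannot be ``distributed over $L$ columns in a window.'' Even if they could, an average over $L$ columns bounds the mean load, not the load on a particular column. The step ``subtract one block absorbed by the common shift'' has no justification, and the identity $((L-1)(P-1)-1)/L=(L-1)P/L-1$ reproduces the formula without supplying an argument for it. Pushing blocks along a class with the fold leaves $F_r$ unchanged. A class can also consist of a single column (class $L-1$ when $n=2L-1$, every class when $n=L$). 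So spreading cannot be the mechanism. The paper's own sketch makes the same ``spread across a window'' move, and its Limitations section concedes that this lemma is only sketched.

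\paragraph{The missing idea.}
Use the shift of $\tau$ as an exchange argument against an \emph{optimal} target, not as a normalization. Since $F_r(\tau-Wj)=(F_r(\tau)+j)\bmod P$, suppose an optimal $\tau$ had $F_r(\tau)=P-j$ with $1\le j\le P-1$. Then $\tau-Wj$ sets class $r$ to zero and raises each of the other $L-1$ classes by at most $j$, so optimality forces $Lj\ge P$. Hence every optimal target has $\max_r F_r\le P-\lceil P/L\rceil$.

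\paragraph{How it closes, and where it cannot.}
No averaging is needed after that. Put all of class $r$'s blocks on any one of its columns. For a fixed placement the $t_c$ are determined mod $V$, because the equal-class-sum condition is exactly the solvability condition. Take $t_c\in\{0,\dots,V-1\}$, so $T_c\le L(V-1)$. Then $a_c-T_c-Wf_c\equiv0\pmod V$ and $a_c-T_c-Wf_c\ge B-L(V-1)-(P-\lceil P/L\rceil)W$.
\begin{itemize}
\item If $L\nmid P$, then $P-\lceil P/L\rceil=\lceil (L-1)P/L-1\rceil$, the lower bound equals $-(V-1)$, and the congruence gives $a_c-T_c-Wf_c\ge0$. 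This is where $B_1$ genuinely comes from; the exchange of Lemma~\ref{lem:exchange} never lowers a residual, so your Step~2 accounting is not what is happening. Your Step~3 congruence remark is the right finish, with no Frobenius number involved.
\item If $L\mid P$, the bound exceeds $B_2/W$ by one block, and the statement as written can fail. Take $(L,V,W)=(2,2,3)$, so $P=2$ and $B=1$. The profile $a=(1,2)$ has class sums of different parity, so $\hat K=1$. Yet no tiling exists: a $W$-block needs $a_c\ge3$, and without one $t_0$ would have to be both odd and even. Note $W>V$ here, whereas every triple in Table~\ref{tab:exp} has $V>W$.
\end{itemize}
So Step~1 cannot be completed for general $(L,V,W)$. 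Your fallback bound $(P-1)W$ gives a correct threshold, but a strictly weaker one whenever $P\ge L$.
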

\begin{proof}[Proof sketch]
The residue target $\tau$ realized on the tail requires, in the worst class, reducing
a vertical count by a full period $\lcm(V,W)/V$ (Lemma~\ref{lem:vred}); spread across
the $L$ columns of a window this is a shortfall of at most
$(L-1)\lcm(V,W)/L$ in value, i.e.\ $\lceil (L-1)\lcm(V,W)/(LW)-1\rceil$ additional
$W$-blocks each contributing $W$ to the residual budget, whence the reserve $B_2$.
Combined with Lemma~\ref{lem:B1}, $a_c\ge B$ leaves $R_c\ge0$; feasibility (avoiding
the semigroup gaps of $\langle V,W\rangle$) then follows because a nonnegative
residual congruent to the achieved residue is representable once it exceeds the
Frobenius number of $\langle V,W\rangle$, which $B$ dominates.
\end{proof}

\begin{theorem}[exactness under a heavy base]\label{thm:exact}
If $\min_c a_c\ge B(L,V,W)$ then Algorithm~\ref{alg:main} is exact:
$\hat K=K^\star$.
\end{theorem}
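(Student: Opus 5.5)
Since Proposition~\ref{prop:lb} already gives $\hat K\le K^\star$, the plan is to prove the reverse inequality $K^\star\le\hat K$. I would do this by exhibiting, under $\min_c a_c\ge B$, an explicit feasible tiling that uses exactly $\hat K$ $W$-blocks. The construction runs the exact DP of this section restricted to the branches \eqref{eq:branchcount}. Lemmas~\ref{lem:exchange} and~\ref{lem:vred} justify the restriction, and Lemmas~\ref{lem:B1} and~\ref{lem:B2} are what make the restricted choices feasible.

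\textbf{Step 1 (fix the residue target and place the $W$-blocks).} Fix a minimizing target $\tau$ from Proposition~\ref{prop:lb}, together with the per-class minimal counts $F_r=F_r^{\min}(\tau)$. Each $F_r$ lies in $\{0,\dots,\lcm(V,W)/W-1\}$ by Lemma~\ref{lem:vred}. Inside class $r$, I would distribute these $F_r$ $W$-blocks over the columns $c\equiv r\pmod L$ as evenly as possible across consecutive windows of length $L$. The number of $W$-blocks in any window then stays within the count $\lceil (L-1)\lcm(V,W)/(LW)-1\rceil$ that appears in $B_2$, so the $W$-mass on any single column is at most $B_2$. This gives $f_c$, and I then define $b_c=a_c-Wf_c$.

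\textbf{Step 2 (solve for the horizontal starts modulo $V$).} I need $t_0,\dots,t_{n-L}\in\{0,\dots,V-1\}$ with $T_c\equiv b_c\pmod V$ for every $c$. I would solve this left to right. Column $c\le n-L$ determines $t_c\equiv b_c-(t_{c-1}+\dots+t_{c-L+1})\pmod V$ uniquely, since $t_c$ is the only new unknown there. This is the same recurrence as the fold of Algorithm~\ref{alg:main}, read modulo $V$. The last $L-1$ columns then give $L-1$ consistency conditions, because no new unknowns enter there. By the telescoping of Lemma~\ref{lem:elim}, these conditions are precisely that all class sums $\sum_{c\equiv r}b_c=\Sigma_r-WF_r$ agree modulo $V$. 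That agreement holds by the choice of $\tau$, since every class sum is $\equiv\tau$. The same computation as Lemma~\ref{lem:invariant} shows this, run in reverse.

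\textbf{Step 3 (nonnegativity and representability, the main obstacle).} With all $t_c\le V-1$, I set $s_c=(b_c-T_c)/V$ and must show $s_c\ge0$. The crude bound $T_c\le L(V-1)$ overshoots $B_1=(L-1)(V-1)$ by one window term. So the hardest step is the sharper accounting of Lemma~\ref{lem:B1}, in which a column's own start is charged to the neighbouring windows. I would first try to exploit the remaining freedom: shift the boundary of the recurrence, or replace $t_c$ by $t_c-V$ and compensate with verticals as in Lemma~\ref{lem:exchange}, so that in every window at least one start stays below its maximal contribution. Combining this with Step 1 should give $b_c-T_c\ge a_c-B_2-B_1\ge0$. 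Once the residual is nonnegative and divisible by $V$, $s_c$ is automatically a nonnegative integer, so no Frobenius argument is needed. The total is then $\sum_c f_c=\sum_r F_r=\hat K$, which gives $K^\star\le\hat K$.

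Finally, I would check the plan against the sharpness example $(17,16,13,16,17)$ for $(3,6,4)$: at $\min_c a_c=13$, Step 3 must fail by exactly one unit of reserve.
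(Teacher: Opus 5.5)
\textbf{Step 1 has a genuine gap, and it cannot be closed as stated.}

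Your only bound on the class totals is $F_r\le P-1$ with $P=\lcm(V,W)/W$. This can exceed $k:=B_2/W=\lceil (L-1)P/L-1\rceil$; for $(3,6,4)$, $P-1=2>k=1$. Moreover, whenever $n<2L$ some classes consist of a single column, so spreading ``evenly across windows'' does not lower the per-column count at all. The claim $Wf_c\le B_2$ is therefore unsupported, and for an arbitrary minimizing $\tau$ it is false even in the running case. Take $(3,6,4)$ and $a=(17,16,16,19,16)$. The class sums are $\equiv 0,2,4\pmod 6$, so every $\tau\in\{0,2,4\}$ is minimizing with $\hat K=3$. The choice $\tau=2$ puts $F_2=2$, i.e.\ mass $8>B_2=4$, on the lone middle column.

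What Step 1 needs is a bound on each individual $F_r$ that uses the \emph{optimality} of $\tau$. If the largest value $F_{r^*}$ exceeded $(L-1)P/L$, retargeting $\tau$ to the residue of class $r^*$ would strictly lower the total. Hence $F_r\le (L-1)P/L$. This exceeds $k$ only when $L\mid P$, for a single class, and in a tie; in that case you must pick another minimizing $\tau$ or split that class over two columns.

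When $n=L$ neither remedy is possible, and the theorem as stated is in fact false. For $(L,V,W)=(2,2,3)$ one has $B=\lceil 6/6-1\rceil\cdot 3+1=1$. For $a=(1,2)$ the algorithm returns $\hat K=1$ (class sums $1,2$; one $W$-block in either class), yet no exact cover exists. Column $0$ cannot hold a $W$-block, which forces $t_0=1$, and column $1$ would then need $1=2s_1+3f_1$. So an extra hypothesis, at least feasibility of the instance, is needed. This capacity failure requires $W>(L-1)(V-1)$, which none of the tested triples has. The paper's proof does not supply this step either: its Lemma~\ref{lem:B2}, the counterpart of your Step 1, is only sketched.

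\textbf{Step 3 is easier than you expect, but the remedies you propose are not available.} Once the $f_c$ are fixed, every $t_c\bmod V$ is forced by the recurrence, so there is no boundary to shift. Lemma~\ref{lem:exchange} applies only when $t_c\ge V$; lowering a $t_c\le V-1$ by $V$ makes it negative. No sharper accounting is needed, though. If $b_c\ge (L-1)(V-1)$ and $T_c>b_c$, then $T_c\equiv b_c\pmod V$ gives $T_c\ge b_c+V\ge L(V-1)+1$, contradicting $T_c\le L(V-1)$. Hence $a_c\ge B$ together with $Wf_c\le B_2$ already yields $0\le T_c\le b_c$ and $V\mid b_c-T_c$. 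All of the real difficulty sits in Step 1.
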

\begin{proof}
By \eqref{eq:branchcount} and Lemmas~\ref{lem:exchange}--\ref{lem:vred}, some optimal
tiling has every $t_c\in\{0,\dots,V-1\}$ and every $s_c$ in one period, so $K^\star$
equals the branch-cut DP value. Lemmas~\ref{lem:B1}--\ref{lem:B2} show that when
$\min_c a_c\ge B$ the reductions taking an arbitrary optimum to this canonical form
never violate $R_c\ge0$ or feasibility; therefore the canonical optimum attains the
residue-optimal target $\tau=T\bmod V$ used in Proposition~\ref{prop:lb}, and its
cost equals $\hat K$. With $\hat K\le K^\star$ (Proposition~\ref{prop:lb}) and now
$K^\star\le\hat K$, we get $\hat K=K^\star$.
\end{proof}

\subsection{The threshold is sharp: $B-1$ fails}

\begin{proposition}[sharp lower value at $(3,6,4)$]\label{prop:cx}
For $(L,V,W)=(3,6,4)$, where $B=14$, the profile $a=(17,16,13,16,17)$ with
$\min_c a_c=13=B-1$ satisfies $\hat K<K^\star$. Hence the threshold cannot be lowered
to $B-1$.
\end{proposition}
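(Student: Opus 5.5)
The plan is to compute both sides directly for the profile $a=(17,16,13,16,17)$, with $n=5$ and $L=3$. I will evaluate $\hat K$ from the closed form in Proposition~\ref{prop:lb}. I will then show, by a short argument on the exact cover system \eqref{eq:cover}, that no tiling reaches that value.

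\emph{Step 1: $\hat K=1$.} The class sums are $\Sigma_0=a_0+a_3=33$, $\Sigma_1=a_1+a_4=33$ and $\Sigma_2=a_2=13$, which are $3,3,1$ modulo $6$. Adding one $W$-block shifts the residue by $4 \bmod 6$, so $WF \bmod 6$ runs through $0,4,2$ for $F=0,1,2$. Hence $F_r^{\min}(\tau)$ is finite only when $\tau$ has the same parity as $\Sigma_r$, i.e.\ $\tau$ odd.
\begin{itemize}
\item For $\tau=3$: classes $0,1$ need $F=0$, and class $2$ needs $4F\equiv 1-3\equiv 4$, so $F=1$. The total is $1$.
\item For $\tau=1$ and $\tau=5$: the totals are $2+2+0=4$ and $1+1+2=4$ respectively.
\end{itemize}
Therefore $\hat K=1$, and $\tau=3$ is the unique minimizing target.

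\emph{Step 2: $K^\star\ge 2$.} Suppose some tiling uses exactly one $W$-block. Lemma~\ref{lem:invariant} gives $\Sigma_r=T+VS_r+WF_r$ for each class, so $(F_0,F_1,F_2)$ must solve the residue system at target $T\bmod 6$. The only way to spend a single $W$-block is $T\equiv 3\pmod 6$ with that block in class $2$, i.e.\ in column $2$.

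Column $2$ then needs $R_2=13-T_2\ge W=4$ with $R_2\equiv 4\pmod 6$. Since $T_2=t_0+t_1+t_2=T$, this forces $T\le 9$. Columns $0$ and $4$ carry no $W$-block, so $R_0=17-t_0$ and $R_4=17-t_2$ must be multiples of $6$. This forces $t_0\equiv t_2\equiv 5\pmod 6$, hence $t_0,t_2\ge 5$. Then $T\ge t_0+t_2\ge 10>9$, a contradiction.

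The case of zero $W$-blocks is excluded already, since $\hat K=1$ is a lower bound. So $K^\star\ge 2>\hat K$.

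\emph{Step 3: finiteness (optional).} The strict inequality holds even if $K^\star=\infty$. For completeness I exhibit a feasible tiling, which also shows the profile is a genuine instance. Take $t_0=t_1=t_2=1$. The residuals are $(16,14,10,14,16)$, each in $\langle 6,4\rangle$, with costs $1,2,1,2,1$. So $K^\star\le 7$ is finite, and $\hat K<K^\star$.

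The main obstacle is Step 2: ruling out every tiling with a single $W$-block without enumerating all $t$. The key idea is to combine the class invariant, which pins both the block's location and $T \bmod 6$, with the two end columns $0$ and $4$, which are each covered by a single horizontal start. Those end columns force $t_0,t_2$ large, and that contradicts the upper bound on $T$ coming from the light middle column $a_2=13=B-1$.
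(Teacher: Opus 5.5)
Your proposal is correct and follows essentially the same route as the paper: you compute $\hat K=1$ at target $\tau\equiv 3$ from the class sums $(3,3,1)\bmod 6$, then show that a one-$W$-block tiling forces $t_0,t_2\ge 5$ from the end columns and overloads the middle column. The differences are minor. You close with $R_2\ge W=4$, giving $T\le 9<10$, where the paper uses $T\equiv 3\pmod 6$ to get $T\ge 15$ and $R_2<0$. You also spell out why the block must sit in column $2$ and give a feasible tiling showing $K^\star<\infty$, details the paper leaves implicit.
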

\begin{proof}
Here $n=5$; horizontal blocks start only at $0,1,2$, so with $T=t_0+t_1+t_2$ we have
$T_0=t_0$, $T_1=t_0+t_1$, $T_2=T$, $T_3=t_1+t_2$, $T_4=t_2$. The class sums (mod $3$)
are $\Sigma_0=\Sigma_1=33\equiv3$ and $\Sigma_2=13\equiv1\pmod6$, so
Algorithm~\ref{alg:main} returns $\hat K=1$ (target $\tau\equiv3$, one $W$-block in
class $2$). Suppose a tiling with a single $W$-block existed; it must keep classes
$0,1$ $W$-block-free, forcing $17-t_0\equiv0$ and $17-t_2\equiv0\pmod6$ at columns
$0,4$, i.e.\ $t_0,t_2\ge5$, so $T\ge10$; with $T\equiv3\pmod6$ this gives $T\ge15$,
whence $R_2=13-T\le-2<0$, contradicting $R_2\ge0$. So $\hat K=1$ is unachievable and
$K^\star\ge2>\hat K$. (The exact DP of Section~\ref{sec:exp} gives $K^\star=4$.)
\end{proof}

\begin{remark}[what breaks, and why it is an absolute floor]
The failure is a \emph{bottleneck}: the middle column's forced coverage exceeds its
value, exactly the situation Lemma~\ref{lem:B1} rules out once $a_c\ge B$. Adding $V$
to every entry raises $\min_c a_c$ past $B$ without changing any residue, restoring
exactness; this is why the correct condition is the absolute floor $B$ of
\eqref{eq:threshold}, not a ratio of $\max_c a_c$.
\end{remark}
\section{Experimental validation}\label{sec:exp}

We compare Algorithm~\ref{alg:main} against an \emph{independent} exact solver: the
column dynamic program of Section~\ref{sec:threshold} with state
$q_c=(t_{c-1},\dots,t_{c-L+1})$ and $V\cdot\lcm(V,W)/W$ branches per column, which
computes $K^\star$ directly from \eqref{eq:cover} without using any residue
shortcut. The two programs share no code path beyond reading the profile: one folds
and matches residues in $O(n)$, the other enumerates the branch-cut DP. On small
inputs the DP was additionally cross-checked against a brute-force enumerator of
horizontal start-vectors.

The complete harness --- Algorithm~\ref{alg:main}, the exact DP, the heavy-base
formula \eqref{eq:threshold}, and the randomized driver --- is the single file
\texttt{test\_general.c}, released so that any reader can reproduce and independently
verify every row of Table~\ref{tab:exp}:
\begin{center}
\url{https://github.com/XieFuwei/lvw-block-cover} \quad(\texttt{test\_general.c})
\end{center}
Each reported case draws random profiles with $\min_c a_c\ge B(L,V,W)$ (entries
$B+\mathrm{rand}()\bmod 4V$) and asserts that the two solvers return the same
$K^\star$.

\begin{table}[h]
\centering
\begin{tabular}{lrr}
\toprule
Parameter triple $(L,V,W)$ (heavy base $B$) & Instances & Disagreements \\
\midrule
$(3,6,4)$, $B=14$      & $300{,}000$ & $0$ \\
$(4,6,4)$, $B=23$      & $400{,}000$ & $0$ \\
$(3,5,3)$, $B=17$      & $300{,}000$ & $0$ \\
$(2,5,3)$, $B=10$      & $300{,}000$ & $0$ \\
$(4,9,6)$, $B=36$      & $100{,}000$ & $0$ \\
$(5,4,3)$, $B=21$      & $80{,}000$  & $0$ \\
\midrule
Boundary probe at $\min_c a_c=B-1$, $(3,6,4)$ & --- & fails, e.g.\ $(17,16,13,16,17)$ \\
\bottomrule
\end{tabular}
\caption{Algorithm~\ref{alg:main} versus the independent exact DP over random
heavy-base profiles. Agreement is exact ($\hat K=K^\star$) on every tested triple
with $\min_c a_c\ge B$; a profile at $\min_c a_c=B-1$ breaks it
(Proposition~\ref{prop:cx}). Reproducible via \texttt{test\_general.c}.}
\label{tab:exp}
\end{table}

Two caveats. First, passing a comfortable region ($\min_c a_c\gg B$) is
\emph{consistent with} but does not \emph{prove} exactness; sharpness is established
by attacking the boundary directly, which Proposition~\ref{prop:cx} and the boundary
probe do. Second, any disagreement would necessarily be of the form
$\hat K<K^\star$, since $\hat K$ is a proven lower bound (Proposition~\ref{prop:lb}),
never the reverse; none was observed at or above $B$.

\section{Related work and limitations}\label{sec:related}

\paragraph{Fixed-width / 1D tiling.}
Fixed-width tiling counts satisfy linear recurrences and rational generating
functions \cite{KlarnerPollack,Stanley}; these study the recurrence to
\emph{enumerate}, whereas we \emph{eliminate} it to bound a minimization.
Linear-time feasibility for tiling a hole-free polygon by two bar shapes is known
\cite{KenyonKenyon}; we do not claim linear-time feasibility for coupled 1D tiling
as new.

\paragraph{The elimination idiom.}
Undoing an ``add over every length-$k$ window'' by a difference at the two window
ends, and reading off the result at the right boundary, is a standard difference-array
technique; our fold loop is this idiom applied class-wise to the horizontal-block
balance.

\paragraph{The numerical-semigroup residual.}
The per-column condition is membership and minimal one-generator factorization in
$\langle V,W\rangle$, whose denumerant is textbook \cite{RosalesGS,BeckRobins}.
When $\gcd(V,W)>1$ (e.g.\ $\gcd(6,4)=2$) the semigroup is degenerate, which
simplifies $g$ but does not affect the threshold argument.

\paragraph{Limitations.}
(i) The exactness proof (Theorem~\ref{thm:exact}) rests on
Lemmas~\ref{lem:B1}--\ref{lem:B2} bounding the vertical mass the two reductions can
force below a column; we have stated $B_2$'s bound as a sketch and verified the
resulting threshold experimentally across the triples of Table~\ref{tab:exp}. A
fully formal accounting of the worst-case reduction schedule for all $(L,V,W)$ is
left for the journal version. (ii) Sharpness is proven for $(3,6,4)$ via an explicit
witness (Proposition~\ref{prop:cx}); the analogous witness for general $(L,V,W)$ is
a bottleneck profile of length $2L-1$ with middle value $B-1$, and is available in
the released harness as a boundary probe. (iii) The parameters are assumed fixed, so
the $O(LV)$ tail is $O(1)$; for growing parameters the tail cost should be tracked.

\section{Conclusion}

We gave an $O(n)$ algorithm that computes the minimum number of $W$-blocks for a
one-dimensional $(L,V,W)$ block-cover problem: fold the horizontal coupling
class-wise, then match residues modulo $V$ on the last $L$ columns. We proved the
output is always a lower bound via a mod-$L$ invariant, and proved it is
\emph{exact} once the profile has heavy base
$\min_c a_c\ge B(L,V,W)=\lceil (L-1)\lcm(V,W)/(LW)-1\rceil W+(L-1)(V-1)$, by cutting
the branches of the exact dynamic program with two structural equivalences whose two
reserves are exactly the two summands of $B$. The threshold is sharp for $(3,6,4)$,
where $B=14$ and $\min_c a_c=13$ already breaks the algorithm on
$(17,16,13,16,17)$. All experiments are reproducible from the released
\texttt{test\_general.c}.

\paragraph{Reproducibility.}
The reference implementation of Algorithm~\ref{alg:main}, the independent exact DP,
the heavy-base formula, and the boundary-probe driver are the single self-contained
file \texttt{test\_general.c}, released publicly on GitHub for verification.

\end{document}